\newtheorem{theorem}{Theorem}
{\theorembodyfont\rmfamily

}
\newtheorem{remark}{Remark}
\newenvironment{proof}{\begin{paragraph}
          {Proof.}}{\end{paragraph}{(QED)}}
\renewenvironment{abstract}
 {\small\begin{quote}{\textbf{Abstract}}\,\,}{\end{quote}}
\newenvironment{keywords}
 {\small\begin{quote}{\textbf{Keywords}}\,\,}{\end{quote}}
\newenvironment{classification}
 {\small\begin{quote}{\textbf{AMS Subject Classification:}}\,\,}{\end{quote}}
\newenvironment{classification1}
 {\small\begin{quote}{\textbf{JEL:}}\,\,}{\end{quote}}
\date{}
\title{\vspace{-9ex}
{\centering
 \textbf{\large The Use of Numeraires in Multi-dimensional Black-Scholes Partial Differential Equations}}}
\author{\small\textsf{\bfseries 
O, Hyong-chol$^{1,2}$  ~ RO, Yong-hwa$^{1}$ ~ Wan, Ning$^{2}$}\\[-.5ex]
{\footnotesize  ${}^{1}$ Centre of Basic Sciences, \textbf{Kim Il Sung} University, Pyongyang, D. P. R. of Korea}\\[-1ex]
{\footnotesize  ${}^3$Department of Applied Mathematics, Tong-ji University, Shanghai, China}\\[-.5ex]
{\footnotesize Authors: Hyong-chol O, 1964-, Researcher ;  Ning WAN, 1979-, Graduate Student}}
\begin{document}

\maketitle
\thispagestyle{empty}

\vspace{-.6cm}

\begin{abstract}
The  change  of  numeraire  gives  very  important  computational  simplification  in  option  pricing. 
This  technique  reduces  the  number  of  sources  of  risks  that  need  to  be  accounted  for  and  so  it  is  useful  in pricing  complicated  derivatives  that  have  several  sources  of  risks.  
In  this  article,  we  considered  the underlying mathematical theory of numeraire technique in the viewpoint of PED theory and illustrated it with five  concrete  pricing  problems.  
In  the  viewpoint  of  PED  theory,  the  numeraire  technique  is  a  method  of reducing the dimension of status spaces where PDE is defined.
\end{abstract}

\begin{keywords}
Numeraire; Black-Scholes Equations; Employee stock ownership plan; Options; Savings Plans; Convertible Bonds; Interest Rate Derivative; Zero coupon bond derivative. 
\end{keywords} 

\begin{classification}
35C05; 35K15; 91B24; 91B28; 91B30.
\end{classification}

\begin{classification1}
\qquad \qquad \qquad \qquad \qquad \qquad G13; G33
\end{classification1}

%
%

\section{Introduction}

The change of numeraire is a very important method of reducing the number of risk sources to consider in option pricing. 
The change of numeraire is well known in theoretical papers but it seems that this method id not often used in application 
and many practitioners don’t know how and when to use it \cite{ben}. Thus in \cite{ben}, they described the essence 
of the method of change of numeraire in the {\it viewpoint of stochastic calculus} and gave 5 typical examples to describe 
the usage of change of numeraire and wrong usage.

The option selecting one asset among two assets with stochastic price is a typical example with several risk sources 
and the change of numeraire has advantage in pricing problems of such options.

The idea to use change of numeraire to simplify option pricing has a nearly the same long history with the famous 
Black-Scholes formula. 
In Black-Scholes formula we can consider US Dollar as a numeraire. 
Merton \cite{mer} used the method of change of numeraire to derive European option pricing formula with zero stochastic yield 
(but he did not use the term “numeraire”). 
Margrabe \cite{mar} first named “{\it numeraire}”. 
In his paper Margrabe acknowledged Steve Ross advising him to use an asset as a numeraire. 
Brenner and Galai \cite{bre} and Fisher \cite{fis} used the method of change of numeraire which we are using now. 
Harrison and Kreps \cite{har} used the price of the security with strictly positive value as a numeraire. 
The asset they used as a numeraire did not have market risk and furthermore had zero interest rate, so the analysis was easier. Geman \cite{gem} and Jamshidian \cite{jam} studied  the mathematical foundation of change of numeraire technique.

This paper can be said to be a PDE edition of \cite{ben}. 
That is, our aim is to describe the essence of the change of numeraire technique in the {\it viewpoint of PDE theory}. 
We will give 4 examples of \cite{ben} and another example of Li and Ren \cite{li} to describe the PDE foundation of the 
usage of change of numeraire:
\begin{itemize}
\item Pricing of employee stock ownership plan (ESOP). 
\item Pricing option with foreign currency strike price.
\item Pricing savings plans that provide a choice of indexing.
\item Pricing convertible bond.
\item Pricing bond which can be converted to stock at maturity.
\end{itemize}

In one word, change of numeraire is a method of reducing space dimension of domain of multi-dimensional PDE 
in viewpoint of PDE theory.

%
%

\section{PDE foundation of Change of Numeraires}

\textbf{Assumptions:} The considered option price depends on $(n+1)$ risk sources.

\begin{enumerate}
\item	$S_0(t), S_1(t), \cdots, S_n(t)$ are the arbitrage-free price processes of tradable assets which don't pay dividend. \label{ee1}
\item	Under the risk neutral measure, $S_0(t), S_1(t), \cdots, S_n(t)$ satisfy 
\begin{equation} \label{eq1}
\frac{dS_i(t)}{S_i(t)} = r(t)dt+\sum_{j=1}^m\sigma_{ij}dW_j(t), \quad i=0, 1, \cdots, n
\end{equation}
Here $r(t)$ is short rate (interest rate), $W_j (j = 1, \cdots, m)$ are one dimensional standard Wiener processes which satisfy 
\begin{align}
& E(dW_i)=0, \label{eq2} \\
& Var(dW_i)=dt, \label{eq3} \\
& Cov(dW_i, dW_j)=0, (i\neq j). \label{eq4}
\end{align} 
\item	 $V(S_0, S_1, \cdots, S_n, t)$ is the arbitrage-free price function of a multi-asset option with maturity $T$.
By Jiang \cite{jia}, $V(S_0, S_1, \cdots, S_n, t)$ satisfies $(n+1)$-dimensional Black-Scholes equation:
\begin{equation} \label{eq5}
\frac{\partial V}{\partial t} + \frac{1}{2}\sum_{i,j=0}^n a_{ij}S_iS_j \frac{\partial^2 V}{\partial S_i \partial S_j} +r(t)\sum_{i=0}^n S_i \frac{\partial V}{\partial S_i}-r(t)V = 0. 
\end{equation} 
Here 
\begin{equation} \label{eq6}
a_{ij} = \sum_{k=0}^n \sigma_{ik} \sigma_{jk}, (i, j = 0, 1, \cdots, n).
\end{equation} 
$A=[a_{ij}]$ is a positive symmetric matrix.
\item The maturity payoff of the option is as follows:
\begin{equation} \label{eq7}
V(S_0, S_1, \cdots, S_n, T) = P(S_0, S_1, \cdots, S_n).
\end{equation} 
\end{enumerate}  
The arbitrage-free price $V(S_0, S_1, \cdots, S_n, t)$ of the option is a solution to the solving problem of PDE \eqref{eq5} 
and \eqref{eq7}. The equation \eqref{eq5} is a parabolic equation with $(n+1)$ spatial variables reflecting $(n+1)$ risk sources.


\begin{theorem} \label{thr1}
Consider the solving problem \eqref{eq5} and \eqref{eq7} in the domain $\{(S_0, S_1, \cdots, S_n, t) \in R^{n+1} \vert t > 0, S_i > 0, i = 0, 1, \cdots, n\}$. If the maturity payoff function $P$ satisfies a homogeneity, that is,
\begin{equation*}
P(aS_0, aS_1, \cdots, aS_n) = aP(S_0, S_1, \cdots, S_n), a>0
\end{equation*}  
then using $S_0$ as a numeraire, that is, change of variable
\begin{equation} \label{eq8}
U=\frac{V}{S_0}, ~~ z_i=\frac{S_i}{S_0}, ~~ i=1, \cdots, n,
\end{equation}  
we can reduce the spatial dimension by one.  Therefore $(n+1)$-dimensional problem \eqref{eq5} and \eqref{eq7} is transformed in to a terminal value problem of an $n$-dimensional Black-Scholes equation. 
\end{theorem}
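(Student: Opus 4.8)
The plan is to first show that the homogeneity of the payoff $P$ forces the solution $V$ itself to be homogeneous of degree one in the price variables, and then to substitute $V=S_0U$ and watch the $(n+1)$-dimensional operator collapse onto an $n$-dimensional one. For the homogeneity I would exploit the scale invariance of the operator in \eqref{eq5}: each spatial term is of Euler type, since $S_iS_j\,\partial^2 V/\partial S_i\partial S_j$ and $S_i\,\partial V/\partial S_i$ are invariant under the dilation $S\mapsto aS$. Hence if $V(S,t)$ solves \eqref{eq5}, so does $V_a(S,t):=a^{-1}V(aS,t)$, and its terminal data equals $a^{-1}P(aS)=P(S)$ by the assumed homogeneity. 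Uniqueness of the terminal-value problem then yields $V_a\equiv V$, i.e.\ $V(aS,t)=aV(S,t)$ for every $a>0$. Consequently $V(S_0,\dots,S_n,t)=S_0\,U(z_1,\dots,z_n,t)$ with $U=V/S_0$ genuinely independent of $S_0$, and Euler's identity $\sum_{i=0}^n S_i\,\partial V/\partial S_i=V$ holds.

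Euler's identity is precisely what annihilates the interest-rate terms: the drift and discount terms combine as $r\sum_{i=0}^n S_i\,\partial V/\partial S_i-rV=rV-rV=0$, so \eqref{eq5} reduces to $\partial V/\partial t+\tfrac12\sum_{i,j=0}^n a_{ij}S_iS_j\,\partial^2 V/\partial S_i\partial S_j=0$. It then remains to rewrite the second-order part. Writing $U_i=\partial U/\partial z_i$ and $U_{ij}=\partial^2 U/\partial z_i\partial z_j$, the chain rule gives $\partial V/\partial S_j=U_j$ for $j\ge1$ and $\partial V/\partial S_0=U-\sum_k z_kU_k$. I would then compute the three blocks of the Hessian and find, after clearing the factors of $S_0$, that $S_iS_j\,\partial^2 V/\partial S_i\partial S_j=S_0\,z_iz_jU_{ij}$ for $i,j\ge1$, that $S_0S_j\,\partial^2 V/\partial S_0\partial S_j=-S_0\,z_j\sum_l z_lU_{jl}$, and that $S_0^2\,\partial^2 V/\partial S_0^2=S_0\sum_{k,l}z_kz_lU_{kl}$.

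Collecting these contributions and symmetrizing the mixed terms, the entire second-order sum becomes $S_0\sum_{i,j=1}^n b_{ij}z_iz_jU_{ij}$ with $b_{ij}=a_{ij}-a_{0i}-a_{0j}+a_{00}$, which is the symmetric positive-semidefinite covariance matrix of the log-ratios $\log(S_i/S_0)$. Dividing by $S_0>0$ produces the $n$-dimensional Black--Scholes equation $\partial U/\partial t+\tfrac12\sum_{i,j=1}^n b_{ij}z_iz_j\,\partial^2 U/\partial z_i\partial z_j=0$, in which the interest rate no longer appears, while the terminal condition transforms by homogeneity into $U(z,T)=P(S_0,\dots,S_n)/S_0=P(1,z_1,\dots,z_n)$, independent of $S_0$ as needed. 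I expect the Hessian bookkeeping, especially the clean cancellation inside $\partial^2 V/\partial S_0^2$, to be the only delicate computation; the conceptual core is the dilation-plus-uniqueness argument for the homogeneity of $V$ together with the Euler identity that removes the $r$-terms.
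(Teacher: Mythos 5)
Your proposal is correct, and its computational core---substituting $V=S_0U$, computing the three Hessian blocks, and collecting the coefficients into $a_{00}-a_{i0}-a_{0j}+a_{ij}$---coincides with the paper's proof, which arrives at exactly the reduced equation \eqref{eq010} and terminal condition \eqref{eq9} that you obtain. Where you genuinely differ is in the logical scaffolding. The paper declares the change of variables \eqref{eq8}, transforms the terminal data by homogeneity of $P$, and then ``changes the $S_i$-derivatives of $V$ into $z_i$-derivatives of $U$''; this tacitly assumes that $U=V/S_0$ depends on $(z_1,\dots,z_n,t)$ alone, i.e.\ that $V$ itself is homogeneous of degree one in the price variables---a fact the paper never establishes (read charitably, its computation is an ansatz-plus-verification argument whose completion still needs a uniqueness statement). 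You close exactly this gap: the dilation invariance of the Euler-type operator in \eqref{eq5} shows that $V_a(S,t)=a^{-1}V(aS,t)$ is again a solution with unchanged terminal data, so uniqueness gives $V(aS,t)=aV(S,t)$, hence $V=S_0U(z,t)$ and Euler's identity $\sum_{i=0}^nS_i\,\partial V/\partial S_i=V$, which removes the $r$-terms for a structural reason before any Hessian bookkeeping. What your route buys is rigor and a cleaner separation of the cancellations; what it costs is an explicit appeal to uniqueness for the degenerate parabolic terminal-value problem \eqref{eq5}, \eqref{eq7}, which holds only within a suitable solution class (e.g.\ polynomial growth)---you should name that class, since both your dilation argument and the paper's implicit one rest on it. Your Hessian identities check out, and your matrix $b_{ij}=a_{ij}-a_{0i}-a_{0j}+a_{00}$ agrees with the paper's (using $a_{0i}=a_{i0}$), whose positive semidefiniteness the paper verifies separately after the theorem, just as you assert.
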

\begin{proof}
Let denote
\begin{equation*}
F(z_1, \cdots, z_n) \overset{\Delta}{=} P(1, z_1, \cdots, z_n).
\end{equation*}   
Then since by the assumption of homogenety of $P$ we have
\begin{equation*}
P(1, z_1, \cdots, z_n) = \frac{P(S_0, S_1, \cdots, S_n)}{S_0} = \frac{V(S_0, S_1, \cdots, S_n, T)}{S_0},
\end{equation*}    
we can write as follows:
\begin{equation} \label{eq9}
U(z_1, \cdots, z_n, T) = F(z_1, \cdots, z_n).
\end{equation}   
Thus the terminal condition is changed to a $n$-dimensional function. 
In \eqref{eq5} if we change the $S_i$-derivatives of $V$ into $z_i$-derivatives of $U$ using the relation
\begin{equation*}
V(S_0, S_1, \cdots, S_n, t)= U(z_1, \cdots, z_n, t)S_0,
\end{equation*}  
then we have
\begin{equation} \label{eq010}
\frac{\partial U}{\partial t} + \frac{1}{2} \sum_{i,j=1}^n (a_{00}-a_{i0}-a_{0j}+a_{ij})z_i z_j \frac{\partial^2 U}{\partial z_i \partial z_j} = 0.
\end{equation}    
\end{proof}

Now if we can use multi-dimensional Black-Scholes formula (Jiang \cite{jia} (7.3.22)) to get the solution representation to the problem \eqref{eq010} and \eqref{eq9}. 
    
The $n$-dimensional matrix of coefficients of the equation \eqref{eq010}. 
\begin{equation*}
\mathbf{B}_n = [a_{00}-a_{i0}-a_{0j}+a_{ij}]_{i,j=1}^n
\end{equation*}  
is a symmetric positive matrix. 
In fact, the property of symmetry is evident and for every $\boldsymbol{\xi} = (\xi_1, \cdots, \xi_n)^{\bot} \in \mathbf{R}^n$ 
($^{\bot}$ means transposed matrix), 
\begin{align*}
\boldsymbol{\xi}^{\bot} \mathbf{B}_n \boldsymbol{\xi} &= \sum_{i,j=1}^n (a_{00}-a_{i0}-a_{0j}+a_{ij}) \xi_i \xi_j \\
& = a_{00} \left( -\sum_{i=1}^n \xi_i\right)^2 + \sum_{i=1}^n a_{i0}\xi_i \left(-\sum_{j=1}^n \xi_j\right) + \sum_{j=1}^n a_{0j} \left(-\sum_{i=1}^n \xi_i\right) \xi_j + \sum_{i,j=1}^n a_{ij}\xi_i \xi_j.
\end{align*}   
Therefore if we let  
\begin{equation*}
\boldsymbol{\eta} = \left( -\sum_{j=1}^n \xi_i, \xi_1, \cdots, \xi_n \right)^{\bot} \in \mathbf{R}^{n+1},
\end{equation*}  
then by positiveness of $\mathbf{A}$, we have
\begin{equation*}
\boldsymbol{\xi}^{\bot}\mathbf{B}_n \boldsymbol{\xi} = \boldsymbol{\eta}^{\bot} \mathbf{A} \boldsymbol{\eta} \geq 0.
\end{equation*}
Therefore \eqref{eq010} is $n$-dimensional Black-Scholes equation and by multi-dimensional Black-Scholes formula, we have
\begin{equation*}
U(Z, t) = \left[ \frac{1}{2\pi(T-t)}\right]^{\frac{n}{2}} \left\vert \det \mathbf{B}_n \right\vert^{-\frac{1}{2}} \int_{0}^{\infty} \cdots \int_{0}^{\infty} \frac{F(y_1, \cdots, y_n)}{y_1, \cdots, y_n} \exp \left[-\frac{\vec{\alpha}^{\bot}\mathbf{B}_n^{-1}\vec{\alpha}}{2(T-t)}\right] dy_1 \cdots dy_n,
\end{equation*}  
where 
\begin{equation*}
Z=(z_1, \cdots, z_n)^{\bot}, ~~ \vec{\alpha} = (\alpha_1, \cdots, \alpha_n)^{\bot}
\end{equation*} 
and
\begin{equation*}
\alpha_i = \ln \frac{z_i}{y_i}-\frac{a_{00}-2a_{i0}+a_{ii}}{2}(T-t), ~~ i=1, \cdots, n.
\end{equation*}  
Returning to the original variables $S_0, S_1, \cdots, S_n$, then we have the solution of \eqref{eq5} and \eqref{eq7}:
\begin{align}
& V(S_0, S_1, \cdots, S_n, t) = S_0U(Z, t) = S_0 \left[ \frac{1}{2\pi (T-t)}\right]^{\frac{n}{2}}\left\vert \det \mathbf{B}_n \right\vert^{-\frac{1}{2}}  \nonumber \\
& \qquad \qquad \times \int_{0}^{\infty} \cdots \int_{0}^{\infty} \frac{P(1, y_1, \cdots, y_n)}{y_1, \cdots, y_n} \exp \left[-\frac{\vec{\alpha}^{\bot}\mathbf{B}_n^{-1}\vec{\alpha}}{2(T-t)}\right] dy_1 \cdots dy_n, \label{eq011}\\
& \alpha_i = \ln \frac{S_i}{y_i S_0}-\frac{a_{00}-2a_{i0}+a_{ii}}{2}(T-t), ~~ i=1, \cdots, n. \label{eq012}
\end{align}    
Thus we proved the following theorem:


\begin{theorem} \label{thr2}
The $n$-dimensional representation of $(n+1)$ dimensional problem \eqref{eq5} and \eqref{eq7} is provided by \eqref{eq011} and \eqref{eq012}.
\end{theorem}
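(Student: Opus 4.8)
The plan is to treat Theorem~\ref{thr1} as the engine of the argument: it already converts the $(n+1)$-dimensional terminal-value problem \eqref{eq5}, \eqref{eq7} into the reduced problem \eqref{eq010}, \eqref{eq9} for $U=V/S_0$ in the variables $z_i=S_i/S_0$. What remains is to solve this reduced problem explicitly and then undo the change of variables \eqref{eq8}. First I would verify that \eqref{eq010} really is an $n$-dimensional Black-Scholes equation, so that a closed-form solution is available; then I would apply the multi-dimensional Black-Scholes formula of Jiang~\cite{jia}; and finally I would substitute back to recover \eqref{eq011} and \eqref{eq012}.

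The step I expect to require the most care is confirming that the coefficient matrix
\[
\mathbf{B}_n=[a_{00}-a_{i0}-a_{0j}+a_{ij}]_{i,j=1}^n
\]
of \eqref{eq010} is symmetric and positive, since the Black-Scholes formula presupposes a genuine (nonnegative) diffusion matrix. Symmetry follows at once from the symmetry of $\mathbf{A}=[a_{ij}]$ recorded in \eqref{eq6}. For positivity I would expand the quadratic form $\boldsymbol{\xi}^\bot\mathbf{B}_n\boldsymbol{\xi}$ into its four blocks in $a_{00}$, $a_{i0}$, $a_{0j}$, $a_{ij}$ and then recognize the result as $\boldsymbol{\eta}^\bot\mathbf{A}\boldsymbol{\eta}$ for the augmented vector
\[
\boldsymbol{\eta}=\left(-\sum_{i=1}^n\xi_i,\xi_1,\cdots,\xi_n\right)^\bot\in\mathbf{R}^{n+1}.
\]
The nonnegativity $\boldsymbol{\xi}^\bot\mathbf{B}_n\boldsymbol{\xi}=\boldsymbol{\eta}^\bot\mathbf{A}\boldsymbol{\eta}\geq 0$ then follows from the assumed positivity of $\mathbf{A}$. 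This algebraic matching is the one place where the specific structure of the numeraire reduction is genuinely used, rather than a mechanical appeal to a known theorem, so it is the natural bottleneck of the argument.

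Once \eqref{eq010} is identified as a valid Black-Scholes equation, I would apply Jiang's multi-dimensional Black-Scholes formula \cite{jia} to the problem \eqref{eq010}, \eqref{eq9}, obtaining $U(Z,t)$ as a Gaussian integral of the terminal data $F(y_1,\cdots,y_n)=P(1,y_1,\cdots,y_n)$ with covariance structure governed by $\mathbf{B}_n^{-1}$ and with the logarithmic-drift shifts built from the diagonal-type combinations $a_{00}-2a_{i0}+a_{ii}$. The last step, returning to the original variables through $V=S_0U$ and $z_i=S_i/S_0$, is routine substitution: it rescales the prefactor by $S_0$, rewrites the terminal density in the $S_i$, and converts each $\ln(z_i/y_i)$ into $\ln\bigl(S_i/(y_iS_0)\bigr)$. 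This produces exactly the representation \eqref{eq011} together with the shifted arguments \eqref{eq012}, establishing Theorem~\ref{thr2}.
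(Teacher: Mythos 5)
Your proposal matches the paper's own proof essentially step for step: both rely on Theorem~\ref{thr1} to reduce to \eqref{eq010} and \eqref{eq9}, establish symmetry and positivity of $\mathbf{B}_n$ via the same augmented vector $\boldsymbol{\eta}=\left(-\sum_{i=1}^n\xi_i,\xi_1,\cdots,\xi_n\right)^{\bot}$ and the identity $\boldsymbol{\xi}^{\bot}\mathbf{B}_n\boldsymbol{\xi}=\boldsymbol{\eta}^{\bot}\mathbf{A}\boldsymbol{\eta}\geq 0$, apply Jiang's multi-dimensional Black-Scholes formula to obtain $U(Z,t)$, and return to the original variables via $V=S_0U$ to arrive at \eqref{eq011} and \eqref{eq012}. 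The argument is correct and requires no changes.
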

\begin{remark}
No dividend condition \ref{ee1} and no correlation assumption \eqref{eq4} are not essential and only for simplification.
\end{remark}

%
%

\section{Examples}

%
%

\subsection{Employee Stock Ownership Plan (ESOP)}

\textbf{Problem:} ESOP is a contract, the holder of which has the right to buy a stock for the following strike price. 
For example, let the maturity be a year $(T_1)$. 
Then the strike price is 
min (the stock price at the time $(T_0)$ after 6 months, the stock price after a year $(T_1)$) $\times 85\% (\beta)$.
Then what is the fair price of ESOP? \\

\noindent \textbf{Mathematical model:}

Let $S(t)$ be the stock price, $T_1$ the maturity, $T_0 < T_1,  0 <\beta < 1$ the discount rate and $V(S, t)$ the price of ESOP.  Then the maturity payoff is as follows:
\begin{equation} \label{eq013}
V(S, T_1) = [S(T_1)-\beta \min (S(T_1), S(T_0))]^{+} = S(T_1)-\beta \min (S(T_1), S(T_0)).
\end{equation} 
 
\noindent \textbf{Assumptions:} 
\begin{enumerate}
\item	Under the risk neutral measure, the stock price satisfies
\begin{equation}\label{eq014}
dS(t) = rS(t)dt + \sigma S(t)dW(t).
\end{equation}     
\item The short rate $r$ is deterministic constant.
\end{enumerate}

In fact the problem itself is relatively simple and so without using change of numeraire we can easily get solution formula. The risk neutral price of  the contract satisfies standard Black-Scholes equation: 
\begin{equation} \label{eq015}
\frac{\partial V}{\partial t} + \frac{1}{2}\sigma^2 S^2 \frac{\partial^2 V}{\partial S^2} +rS \frac{\partial V}{\partial S}-rV = 0.
\end{equation}   
Therefore the price function of ESOP satisfies the terminal value problem \eqref{eq015} and \eqref{eq013}.  
The maturity payoff \eqref{eq013} is rewritten as follows:
\begin{equation*}
V(S, T_1) = (1-\beta)S(T_1)+\beta \max (S(T_1)-S(T_0), 0).
\end{equation*}  
In the interval $(T_0, T_1]$ the price $S(T_0)$ is already known and thus the contract can be seen as a portfolio consisting 
of $1-\beta$ shares of stock and $\beta$ sheets of option with strike price $S(T_0)$. 
So the price of  the contract at time $t (T_0 \leq t < T_1)$ is given by
\begin{align*}
& V(S, t) = (1-\beta)S(t)+\beta[S(t)N(d_1)-S(T_0)e^{-r(T_1-t)}N(d_2)], ~~ T_0 \leq t < T_1 \\
& d_1 = \frac{\ln \left( \frac{S(t)}{S(T_0)}\right) + \left(r+\frac{1}{2}\sigma^2\right)(T_1-t)}{\sigma \sqrt{T_1-t}}, ~~ d_2 = d_1-\sigma \sqrt{T_1-t}. 
\end{align*}  
In particular at time $T_0$,
\begin{equation*}
V(S, T_0) = S(T_0)\left[ 1-\beta+\beta N(d_{1, 0})-\beta e^{-r(T_1-T_0)}N(d_{2, 0})\right], 
\end{equation*}
\begin{equation} \label{eq016}
d_{1, 0} = \left( \frac{r}{\sigma} + \frac{\sigma}{2} \right) \sqrt{T_1-T_0}, ~~ d_{2, 0} = d_{1, 0} - \sigma \sqrt{T_1-T_0}.
\end{equation}   
So its price at time $T_0$ is just the same with the price of deterministic shares of stock, thus the price of the contract at time 
$t (0 \leq t < T_0)$ is given by
\begin{equation} \label{eq017}
V(S, t) = S(t)\left[ 1-\beta+\beta N(d_{1, 0})-\beta e^{-r(T_1-T_0)}N(d_{2, 0})\right], ~~ 0 \leq t \leq T_0,
\end{equation}   
where $d_{1, 0}$ and $d_{2, 0}$ are as in \eqref{eq016}. \\
 
\noindent \textbf{Use of change of numeraire:} In \eqref{eq013} it is not so natural to see $S(T_0)$ as the time $t$-price of a tradable asset. To overcome this difficulty, in \cite{ben} a new asset $S_0$ defined as follows:  
\begin{equation*}
S_0(t) = \left\{ 
\begin{array}{ll}
S(t), & 0 \leq t \leq T_0, \\
S(T_0)e^{r(t-T_0)}, & T_0 \leq t \leq T_1. 
\end{array}
\right.
\end{equation*}
 
Note that $S_0$ is the time $t$-price of the portfolio, the holder of which buys a share of stock at time $t = 0$, 
keeps it until $t = T_0$, then sells it at time $t = T_0$ and saves the cash of the quantity of $S(T_0)$ into bank. 
Then 
\begin{equation*}
S_0(T_1) = S(T_0)e^{r(T_1-T_0)},
\end{equation*} 
and if we let 
\begin{equation} \label{eq018}
K=e^{-r(T_1-T_0)},
\end{equation}  
then the maturity payoff \eqref{eq013} is rewritten as follows: 
\begin{equation} \label{eq019}
V(S, S_0, T_1) = S(T_1) - \beta \min (S(T_1), K \cdot S_0(T_1)).
\end{equation}  
In \eqref{eq019}, $S_0(T_1)$ can be seen as the price at time $T_1$ of a tradable asset $S_0$, and from the definition of $S_0$ and the assumption \eqref{eq014}, we can know that under the risk neutral measure, $S_0$ satisfies
\begin{equation} \label{eq020}
dS_0(t) = rS_0(t)dt + \sigma_0 S_0(t)dW(t),
\end{equation}   
where
\begin{equation*}
\sigma_0(t) = \left\{ 
\begin{array}{ll}
\sigma, & 0 \leq t \leq T_0, \\
0, & T_0 \leq t \leq T_1. 
\end{array}
\right.
\end{equation*} 

Using no arbitrage technique, we can establish the equation that the risk neutral price function $V(S, S_0, t)$ of our contract satisfies. By $\Delta$-hedging construct a portfolio $\Pi$: 
\begin{equation*}
\Pi = V-\Delta S - \Delta_0 S_0.
\end{equation*}  
Select $\Delta$ and $\Delta_0$ so that the portfolio $\Pi$ becomes risk free in the interval $(t, t + dt)$, that is,
\begin{equation*}
d \Pi = r \Pi dt
\end{equation*}   
and thus we have
\begin{equation} \label{eq021}
dV-\Delta dS-\Delta_0 dS_0 = r(V-\Delta S - \Delta_0 S_0)dt.
\end{equation}    
By 2-dimensional It\^o formula, we get
\begin{equation} \label{eq022}
dV = \left\{ \frac{\partial V}{\partial t} + \frac{1}{2} \left[ \sigma^2 S^2 \frac{\partial^2 V}{\partial S^2} +2\sigma \sigma_0(t) S S_0 \frac{\partial^2 V}{\partial S \partial S_2} +\sigma_0^2 (t) S_0^2 \frac{\partial^2 V}{\partial S_0^2} \right] \right\}dt + \frac{\partial V}{\partial S}dS + \frac{\partial V}{\partial S_0}dS_0.
\end{equation} 
Thus we select
\begin{equation} \label{eq023}
\Delta = \frac{\partial V}{\partial S}, ~~  \Delta_0 = \frac{\partial V}{\partial S_0},
\end{equation}  
and we substitute \eqref{eq022} and \eqref{eq023} into \eqref{eq021} to get 
\begin{equation} \label{eq024}
\frac{\partial V}{\partial t} + \frac{1}{2} \left[ \sigma^2 S^2 \frac{\partial^2 V}{\partial S^2} +2\sigma \sigma_0(t) S S_0 \frac{\partial^2 V}{\partial S \partial S_2} +\sigma_0^2 (t) S_0^2 \frac{\partial^2 V}{\partial S_0^2} \right] + r\frac{\partial V}{\partial S}S + r \frac{\partial V}{\partial S_0}S_0-rV = 0.
\end{equation} 
Therefore the two factor model of ESOP price is just the terminal value problem \eqref{eq024} and \eqref{eq019}. 
Except for the fact that the volatility $\sigma_0$ depends in time, the problem \eqref{eq024} and \eqref{eq019} 
satisfies the all conditions of Theorem \ref{thr1}. In this case the change of numeraire 
 \begin{equation} \label{eq025}
U=\frac{V}{S_0}, ~~ z=\frac{S}{S_0}
\end{equation} 
does work well, that is, using \eqref{eq025} the problem \eqref{eq024} and \eqref{eq019} is transformed into the following terminal problem of one dimensional Black-Scholes equation: 
\begin{equation}\label{eq026}
\left\{ 
\begin{array}{l}
\frac{\partial U}{\partial t} + \frac{1}{2}(\sigma-\sigma_0(t))^2 z^2 \frac{\partial^2 U}{\partial z^2} = 0, \\
U(z, T_1) = (1-\beta_1) z + \beta \max (z-K, 0).
\end{array}
\right.
\end{equation}  
Solving \eqref{eq026} and returning to the original variables, then we get the formula \eqref{eq017}.
\begin{remark}
The formula \eqref{eq017} is the same with that of \cite{ben}.
\end{remark}
   
%
%

\subsection{Pricing Options with Foreign Currency Strike Price}

Some option's strike price is related to foreign currency. 
For example there is an option whose listed underlying stock price is UK pounds while the strike price is US dollars. 
The aim of designing such options is to stimulate financial managers so as to maximize US dollar price of the stock.\\

\noindent \textbf{Problem:} Assume that an option's underlying stock is traded using UK pounds while the strike price is paid using US dollars: 
1) the stock price $S(0)$ is the UK pound price of the strike price (because of tax and etc., most executing stock options are always at the money at the initial time);  
2) the US dollar price of $S(0)$ is just the constant strike price;  
3) the holder of the option can buy the underlying stock for the fixed US dollar strike price.

Since the stock is traded using UK pounds and the fixed US dollar strike price corresponds to stochastic UK pounds strike price, the pricing of this option is {\it not a standard} pricing problem. Using change of numeraire, we can simplify this pricing problem.\\

\noindent \textbf{Mathematical model:}
  
Let $S(t)$ denote the stock price (UK pounds), $r_p$ UK (risk free) interest rate, $r_d$ US (risk free) interest rate, 
$X(t)$ US dollar/UK pound exchange rate ($Y(t) = X(t)^{-1}$: UK pound/US dollar exchange rate), $K_d$  the fixed US dollar strike price and $K_p(t)$ the UK pound price of the strike price. \\

\noindent \textbf{Assumptions:} 
\begin{enumerate}
\item	The stock price satisfies the following geometrical Brownian motion
\begin{equation*} 
dS(t) = \alpha_S S(t)dt + \sigma_S S(t)dW^S(t)
\end{equation*} 
under a natural measure.
\item Risk free rates $r_p$ and $r_d$ are deterministic constant.
\item US dollar/UK pound exchange rate $X(t)$ satisfies the model of Garman and Kohlhagen \cite{gar}
\begin{equation*} 
dX(t) = \alpha_X X(t)dt + \sigma_X X(t)dW^X(t) (\textnormal{under a natural measure}).
\end{equation*}
\end{enumerate} 
By It\^o formula, UK pound/US dollar exchange rate $Y(t)$ satisfies 
\begin{equation*} 
dY(t) = \alpha_Y Y(t)dt + \sigma_Y Y(t)dW^Y(t) (\textnormal{under a natural measure}).
\end{equation*} 
Here
\begin{equation*}
\sigma_Y = \sigma_X, ~~ W^Y = -W^X.
\end{equation*}  
$W^S(t)$ and $W^X(t)$ are vector Wiener processes with the following relations:
\begin{equation}\label{eq027}
\begin{array}{l}
dW^S(t) \cdot dW^X(t) = \rho dt, \\
dW^S(t) \cdot dW^Y(t) = - \rho dt, ~~ \vert \rho \vert <1.
\end{array}
\end{equation}   
As for {\bf the strike price} we have
\begin{align*}
& K_p(0) = S(0), \\
& K_d =K_p(0) \cdot X(0) = S(0) \cdot X(0) \equiv \textnormal{constant}.
\end{align*}    
The US dollar strike price is a constant but the exchange rate is stochastic, and so the UK pound price of the strike price is a stochastic variable:
\begin{equation*}
K_p(t) = K_d \cdot Y(t) = S(0) \cdot X(0) \cdot X(t)^{-1}.
\end{equation*}  
In particular
\begin{equation}\label{eq028}
K_p(T) = S(0) \cdot X(0) \cdot X(T)^{-1} = S(0) \cdot Y(T) \cdot Y(0)^{-1}.
\end{equation}   
As for {\bf the maturity payoff},  the maturity payoff in US dollar is
\begin{equation*}
F_d = \max (S(T) \cdot X(t)-K_d, 0).
\end{equation*} 
By \eqref{eq028} the maturity payoff in UK pound is
\begin{equation*}
F_p = \max (S(T)-K_p(T), 0) = \max (S(T)-S(0) \cdot Y(0)^{-1} \cdot Y(T), 0).
\end{equation*} 

\begin{remark}
There are two natural methods in pricing this option: US dollar pricing and UK pound pricing. 
In UK pound pricing the use of theorem 1 can transform 2-dimensional problem into one-dimensional problem while in US 
dollar pricing the maturity payoff function does not satisfy the condition of theorem 1 but another change of variables are valid.
 \end{remark} 
    
\noindent \textbf{US dollar pricing}

Let $V_d = V(S, X, t)$ be the US dollar price of our option. Using $\Delta$-hedging, construct a portfolio
\begin{equation*}
\Pi = V-\Delta_1 SX- \Delta_2X.
\end{equation*}  
(This portfolio consists of a sheet of option, $\Delta_1$ shares of stock and $\Delta_2$ UK pounds and its price is calculated in US dollar.) We select $\Delta_1$ and $\Delta_2$ so that $\Pi$ becomes risk free in $(t, t + dt)$, that is,
\begin{equation*}
d\Pi = r_d \Pi dt
\end{equation*}   
which is equivalent to
\begin{equation}\label{eq029}
d\Pi = dV - \Delta_1 d(SX) - \Delta_2dX - \Delta_2 r_p dt \cdot X = r_d (V-\Delta_1 SX - \Delta_2 X)dt.
\end{equation}   
By It\^o formula, we have
\begin{align*}
& dV = \left\{ \frac{\partial V}{\partial t} + \frac{1}{2} \left[ \sigma_S^2 S^2 \frac{\partial^2 V}{\partial S^2} +2\rho \sigma_S \sigma_X SX \frac{\partial^2 V}{\partial S \partial X} + \sigma_X^2 X^2 \frac{\partial^2 V}{\partial X^2}\right]\right\}dt \\
& \qquad \qquad + \frac{\partial V}{\partial S}dS + \frac{\partial V}{\partial X}dX, \\
& d(SX) = SdX + XdS + \sigma_S \sigma_X SX \rho dt.
\end{align*} 
Substituting the above equalities into \eqref{eq029} we have 
\begin{align*}
& \left\{ \frac{\partial V}{\partial t} + \frac{1}{2} \left[ \sigma_S^2 S^2 \frac{\partial^2 V}{\partial S^2} +2\rho \sigma_S \sigma_X SX \frac{\partial^2 V}{\partial S \partial X} + \sigma_X^2 X^2 \frac{\partial^2 V}{\partial X^2}\right]\right\}dt + \left( \frac{\partial V}{\partial S}- \Delta_1 X \right) dS \\
& \qquad \qquad + \left( \frac{\partial V}{\partial X}-\Delta_1S-\Delta_2 \right)dX - \Delta_1 \sigma_S \sigma_X SX \rho dt - \Delta_2 Xr_p dt \\
&  \qquad = r_d (V-\Delta_1 SX - \Delta_2 X)dt.
\end{align*}  
Thus if we select
\begin{equation}
\Delta_1 = \frac{1}{X} \frac{\partial V}{\partial S}, ~~ \Delta_2 = \frac{\partial V}{\partial X} - \Delta_1 S = \frac{\partial V}{\partial X} - \frac{S}{X} \frac{\partial V}{\partial S},
\end{equation}    
then we have the PDE model of US dollar price of the option:
\begin{align}
& \frac{\partial V}{\partial t} + \frac{1}{2} \left[ \sigma_S^2 S^2 \frac{\partial^2 V}{\partial S^2} + 2\rho \sigma_S \sigma_X SX \frac{\partial^2 V}{\partial S \partial X} + \sigma_X^2 X^2 \frac{\partial^2 V}{\partial X^2} \right] \nonumber \\
& \qquad +(r_p-\rho \sigma_S \sigma_X)S \frac{\partial V}{\partial S} + (r_d-r_p)X \frac{\partial V}{\partial X} - r_dV = 0, \label{eq031}\\
& V(S, X, T) = \max (S \cdot X - K_d, 0). \label{eq032}
\end{align}    
The maturity payoff function \eqref{eq032} does not satisfy the homogeneity condition but the use of the change of variable 
\begin{equation} \label{eq033}
z=SX
\end{equation} 
(the essence of \eqref{eq033} is to change UK pound price to US dollar price.) transforms the problem \eqref{eq031} and \eqref{eq032} into one dimensional problem: 
\begin{equation}\label{eq034}
\left\{
\begin{array}{l}
\frac{\partial V}{\partial t} + \frac{1}{2} (\sigma_S^2 + 2\rho \sigma_S \sigma_X + \sigma_X^2)z^2 \frac{\partial^2 V}{\partial z^2} + r_d z \frac{\partial V}{\partial z} -r_d V = 0, \\
V(z, T) = \max (z-K_d, 0).
\end{array}
\right.
\end{equation}   
This is an ordinary call option (in US dollar) problem and thus by standard Black-Schole formula 
we have
\begin{align*}
& V(z, t) = zN(d_1) - K_de^{-r_d(T-t)}N(d_2), \\
& d_1 = \frac{\ln \frac{z}{K_d}+\left( r_d + \frac{1}{2} \sigma_{S, X}^2 \right)(T-t)}{\sigma_{S, X}\sqrt{T-t}}, ~~ d_2 = d_1-\sigma_{S, X} \sqrt{T-t}.
\end{align*}
Here
\begin{equation} \label{eq035}
\sigma_{S, X}^2 = \sigma_S^2 + 2\rho\sigma_S \sigma_X + \sigma_X^2.
\end{equation}  
Returning to the original variables, we get the {\bf US dollar price} of the option:
\begin{equation} \label{eq036}
V_d(S, X, t) = SXN(d_1) - K_d e^{-r_d(T-t)} N(d_2),
\end{equation}   
\begin{equation*} 
d_1 = (\sigma_{S, X}\sqrt{T-t})^{-1} \left[ \ln(SX/K_d) + (r_d+\sigma_{S, X}^2/2)(T-t)\right], ~ d_2 = d_1-\sigma_{S, X}\sqrt{T-t}.
\end{equation*}  
From the fact that $V_p = V_d X^{-1}$ and $K_d = S(0)X(0)$, we get the {\bf UK pound price} of the option:
\begin{equation} \label{eq037}
V_p(S, X, t) = SN(d_1) - S_0X_0X^{-1}e^{-r_d(T-t)}N(d_2),
\end{equation}   
\begin{align*}
& d_1 = \frac{\ln \frac{SX}{S_0X_0} + \left( r_d+\frac{1}{2}\sigma_{S, X}^2 \right)(T-t)}{\sigma_{S, X}\sqrt{T-t}}, \\
& d_2 = d_1 - \sigma_{S, X} \sqrt{T-t}.
\end{align*}

In {\bf Direct Calculation of UK pound price} the change of numeraire technique works well. Remind that
\begin{equation*} 
F_p = \max(S(T)-S(0) \cdot Y(0)^{-1} \cdot Y(T), 0).
\end{equation*}  
Let $V_p(S, Y, t)$ be the UK pound price function of the option. By $\Delta$-hedging, construct a portfolio 
\begin{equation*} 
\Pi = V - \Delta_1S - \Delta_2 Y.
\end{equation*}  
(This portfolio consists of a sheet of option, $\Delta_1$ share of stock and $\Delta_2$ US dollar and its price is calculated in UK pound.) We select $\Delta_1$ and $\Delta_2$ so that $\Pi$ becomes risk free in $(t, t + dt)$, that is,
\begin{equation*} 
d\Pi = r_p \Pi dt
\end{equation*}   
which is equivalent to
\begin{equation} \label{eq038}
dV-\Delta_1dS-\Delta_2dY-\Delta_2r_d dt \cdot Y = r_p(V-\Delta_1 S-\Delta_2 Y)dt.
\end{equation}    
By It\^o formula, we have
\begin{equation} \label{eq039}
dV = \left\{ \frac{\partial V}{\partial t}+\frac{1}{2}\left[ \sigma_S^2 S^2 \frac{\partial^2 V}{\partial S^2}-2\rho\sigma_S \sigma_Y SY \frac{\partial^2 V}{\partial S \partial Y} + \sigma_Y^2 Y^2 \frac{\partial^2 V}{\partial Y^2} \right] \right\}dt + \frac{\partial V}{\partial S}dS + \frac{\partial V}{\partial Y}dY.
\end{equation}    
Thus if we select
\begin{equation} \label{eq040}
\Delta_1 = \frac{\partial V}{\partial S}, ~~ \Delta_2 = \frac{\partial V}{\partial Y},
\end{equation}   
and substitute \eqref{eq039} and \eqref{eq040} into \eqref{eq038}, then we have the PDE model of UK pound pricing:
\begin{align}
& \frac{\partial V}{\partial t} + \frac{1}{2} \left[ \sigma_S^2 S^2 \frac{\partial^2 V}{\partial S^2} - 2\rho \sigma_S \sigma_Y SY \frac{\partial^2 V}{\partial S \partial Y} + \sigma_X^2 Y^2 \frac{\partial^2 V}{\partial Y^2} \right] \nonumber \\
& \qquad + r_pS \frac{\partial V}{\partial S} + (r_p-r_d)Y \frac{\partial V}{\partial Y} - r_pV = 0, \label{eq041}\\
& V(S, Y, T) = \max (S - K_dY, 0). \label{eq042}
\end{align}     
Here $K_d$ is constant. The maturity payoff function \eqref{eq042} satisfy the homogeneity condition and thus by the standard change of numeraire 
\begin{equation} \label{eq043}
U = \frac{V}{Y}, ~~ z=\frac{S}{Y}
\end{equation}   
(the essence of \eqref{eq043} is to change UK pound prices of option and stock to US dollar prices of them.) the problem \eqref{eq041} and \eqref{eq042} can be transformed into one dimensional problem: 
\begin{equation}\label{eq044}
\left\{
\begin{array}{l}
\frac{\partial U}{\partial t} + \frac{1}{2} (\sigma_S^2 + 2\rho \sigma_S \sigma_Y + \sigma_Y^2)z^2 \frac{\partial^2 U}{\partial z^2} + r_d z \frac{\partial U}{\partial z} -r_d U = 0, \\
U(z, T) = \max (z-K_d, 0).
\end{array}
\right.
\end{equation}    
If we consider \eqref{eq026}, then we can know that \eqref{eq044} is just the same with \eqref{eq034}. 
Thus using standard Black-Scholes formula and returning to the original variables we have 
\begin{equation} \label{eq045}
V_p(S, Y, t) = SN(d_1) - \frac{S_0}{Y_0}Ye^{-r_d(T-t)}N(d_2),
\end{equation}   
\begin{align*}
& d_1 = \frac{\ln \frac{SY_0}{YS_0} + \left( r_d+\frac{1}{2}\sigma_{S, Y}^2 \right)(T-t)}{\sigma_{S, Y}\sqrt{T-t}}, \\
& d_2 = d_1 - \sigma_{S, Y} \sqrt{T-t}.
\end{align*} 
Here $\sigma_{S, Y} = \sigma_{S, X}$ defined in \eqref{eq035}.
\begin{remark}
The formulae \eqref{eq037} and \eqref{eq045} are the same with those in \cite{ben}.
 \end{remark} 

%
%

\subsection{Savings Plans that provide a choice of indexing}

\textbf{A Practical Problem:} 
A bank in Israel has a service of savings account with choice item of interest rates, the holder of which has the right to select one among Israel interest rate and US interest rate. 
This saving contract is an {\it interest rate exchange} option. 
For example, if a saver today saves 100 NIS (unit of Israel currency) to NIS/USD savings account with maturity 1 year, then after 1 year the bank will pay the saver the maximum value among the following two value:
\begin{itemize}
\item Sum of 100 NIS and its interest of NIS (multiplied by the rate of inflation in Israel).
\item Sum of US dollar price of 100 NIS and its interest of USD (multiplied by NIS/USD exchange rate).  
\end{itemize}
Such savings plan is a kind of NIS/USD interest rate {\it exchange options} and it has a risk of NIS/USD exchange rate. What is the fair price of this saving account contract? \\

\noindent \textbf{Mathematical model:}
  
Let $r_d$ denote the domestic (risk free) interest rate,  $r_f$ the foreign (risk free) interest rate, $X(t)$ domestic currency / foreign currency exchange rate $(Y(t) = X(t)^{-1}$: foreign currency / domestic
currency exchange rate) and $I(t)$ the domestic inflation process (domestic {\it price level}). \\

\noindent \textbf{Assumptions:} 
\begin{enumerate}
\item	The invested quantity of money to the savings plan is 1 unit of (domestic) currency, $r_d$ and $r_f$ are constants. 
\item The maturity payoff is as follows:
\begin{align*} 
& V_d = \max \left\{ e^{r_dT}I(T), Y(0)e^{r_fT}X(T)\right\} ~ \textnormal{(domestic currency)}, \\
& V_f = V_d \cdot Y(T) = \max \left\{ e^{r_dT}I(T)Y(T), Y(0)e^{r_fT}\right\} ~ \textnormal{(foreign currency)}.
\end{align*}   
\item The exchange rate $Y(t)$ satisfies the model of Garman and Kohlhagen \cite{gar}:
\begin{equation*} 
dY(t) = Y(t)(r_f-r_d)dt + Y(t)\sigma_YdW^Y(t) ~ \textnormal{(risk neutral measure)}.
\end{equation*}  
Under this assumption from It\^o formula, for $X = Y^{-1}$ we have   
\begin{align*} 
dX & = -\frac{1}{Y^2}dY + \frac{1}{2}\frac{2}{Y^3}(dY)^2 \\
& = \left[ -\frac{1}{Y^2}Y(r_f-r_d) + \frac{1}{Y^3}Y^2\sigma_Y^2 \right]dt - \frac{1}{Y^2}Y\sigma_YdW^Y(t) \\
& = \frac{1}{Y}(r_d-r_f+\sigma_Y^2)dt - \frac{1}{Y}\sigma_YdW^Y(t).
\end{align*} 
Thus the exchange rate $Y(t)$ follows a geometric Brownian motion: 
\begin{equation*} 
dX(t) = \alpha_XX(t)dt + X(t) \sigma_X dW^X(t).
\end{equation*}   
Here
\begin{align*} 
& \sigma_X = \sigma_Y, \\
& dW^X(t) = -dW^Y(t) \\
& dW^I(t) \cdot dW^X(t) = -\rho dt, ~ \vert \rho \vert <1.
\end{align*}  
\item The domestic inflation process (price level) $I(t)$ satisfies (geometric Brownian motion)
\begin{equation*} 
dI(t) = \alpha_I I(t)dt + \sigma_I I(t) dW^I(t) ~ \textnormal{(risk neutral measure)}.
\end{equation*}   
Here $W^Y(t)$ and $W^I(t)$ are vector Wiener process with the following relation 
\begin{equation*} 
dW^I(t) \cdot dW^Y(t) = \rho dt, ~ \vert \rho \vert <1.
\end{equation*}  
\item The domestic price function of the option is a deterministic function $V_d = V(X, I, t)$.
\end{enumerate}

\noindent \textbf{PDE modeling of option price:} By $\Delta$-hedging, construct a portfolio $\Pi$
\begin{equation*} 
\Pi = V-\Delta_1X-\Delta_2I.
\end{equation*}  
(This portfolio consists of a sheet of option, $\Delta_1$ units of foreign currency and $\Delta_2$ units of domestic currency and its price is calculated in domestic currency.) 
We select $\Delta_1$ and $\Delta_2$ so that $\Pi$ becomes risk free in $(t, t + dt)$, that is,
\begin{equation*} 
d\Pi = r_d \Pi dt
\end{equation*} 
which is equivalent to
\begin{equation} \label{eq046}
dV-\Delta_1dX-\Delta_1r_fdt \cdot X - \Delta_2dI - \Delta_2 r_ddt \cdot I = r_d(V-\Delta_1X-\Delta_2I)dt.
\end{equation}  
By It\^o formula, we have
\begin{equation} \label{eq047}
dV = \left\{ \frac{\partial V}{\partial t}+\frac{1}{2}\left[ \sigma_X^2 X^2 \frac{\partial^2 V}{\partial X^2}-2\rho\sigma_X \sigma_I XI \frac{\partial^2 V}{\partial X \partial I} + \sigma_I^2 I^2 \frac{\partial^2 V}{\partial I^2} \right] \right\}dt + \frac{\partial V}{\partial X}dX + \frac{\partial V}{\partial I}dI.
\end{equation}     
Thus if we select
\begin{equation} \label{eq048}
\Delta_1 = \frac{\partial V}{\partial X}, ~~ \Delta_2 = \frac{\partial V}{\partial I},
\end{equation}  
and substitute \eqref{eq047} and \eqref{eq048} into \eqref{eq046}, then we have the PDE model of option price:
\begin{align}
& \frac{\partial V}{\partial t}+\frac{1}{2}\left[ \sigma_X^2 X^2 \frac{\partial^2 V}{\partial X^2}-2\rho\sigma_X \sigma_I XI \frac{\partial^2 V}{\partial X \partial I} + \sigma_I^2 I^2 \frac{\partial^2 V}{\partial I^2} \right] + (r_d-r_f)X \frac{\partial V}{\partial X} - r_dV = 0 \label{eq049} \\
& V(X, I, T) = \max \left\{ e^{r_dT}I(T), Y(0) e^{r_fT}X(T)\right\}. \label{eq050}
\end{align}   
The maturity payoff function \eqref{eq050} satisfy the homogeneity condition and thus the standard change of numeraire
\begin{equation} \label{eq051}
U=\frac{V}{X}, ~~ z=\frac{I}{X}
\end{equation}   
can reduce the special dimension. (The financial meaning of \eqref{eq051} is to change domestic prices of option and stock to foreign prices of them.)  If we consider
\begin{align*}
& \frac{\partial V}{\partial t} = \frac{\partial U}{\partial t}X, ~~ I \frac{\partial V}{\partial I} = z\frac{\partial U}{\partial z}X, ~~ X\frac{\partial V}{\partial X} = \left( -z \frac{\partial U}{\partial z}+U\right)X, \\
& I^2 \frac{\partial^2 V}{\partial I^2} = X^2 \frac{\partial^2 V}{\partial X^2} = z^2 \frac{\partial^2 U}{\partial z^2}Y, ~~ IX\frac{\partial^2 V}{\partial I \partial X} = -z^2 \frac{\partial^2 U}{\partial z^2} X, \\
& U(z, T) = e^{r_dT} \max \left\{ z, Y(0)e^{(r_f-r_d)T}\right\} \\
& \qquad \quad ~ = e^{r_dT}\left\{ Y(0)e^{(r_f-r_d)T} + \max \left[ z-Y(0)e^{(r_f-r_d)T}, 0\right] \right\},
\end{align*}   
the problem \eqref{eq049} and \eqref{eq050} can be transformed into one dimensional problem:
\begin{equation}\label{eq052}
\left\{
\begin{array}{l}
\frac{\partial U}{\partial t} + \frac{1}{2} (\sigma_X^2 + 2\rho \sigma_X \sigma_I + \sigma_I^2)z^2 \frac{\partial^2 U}{\partial z^2} + (r_f-r_d)z \frac{\partial U}{\partial z} -r_f U = 0, \\
U(z, T) = A+e^{r_dT} \max (z-K, 0).
\end{array}
\right.
\end{equation}  
Here
\begin{equation} \label{eq053}
A=Y(0)e^{r_fT}, ~~ K=Y(0)e^{(r_f-r_d)T}.  
\end{equation}  
Thus using standard Black-Scholes formula with risk free rate $r_f$, dividend rate $r_d$ and volatility $\sigma_{X, I}$ given by
\begin{equation} \label{eq054}
\sigma_{X, I}^2 = \sigma_{X}^2 + 2\rho \sigma_I \sigma_X + \sigma_I^2 > 0.
\end{equation}  
We have
\begin{equation*} 
U(z, t) = Ae^{-r_f(T-t)} + e^{-r_dT}\left( z e^{-r_d(T-t)}N(d_1) - Ke^{-r_f(T-t)}N(d_2)\right),
\end{equation*}   
\begin{align*}
& d_1 = \frac{\ln \frac{z}{K} + \left( r_f - r_d+\frac{1}{2}\sigma_{X, I}^2 \right)(T-t)}{\sigma_{X, I}\sqrt{T-t}}, \\
& d_2 = d_1 - \sigma_{X, I} \sqrt{T-t}.
\end{align*}  
Returning to original variables $(V, X, I)$ with consideration of \eqref{eq053}, we have
\begin{align*}
V_d(X, I, t) & = X(t)Y(0)e^{r_fT}e^{-r_f(T-t)} + e^{r_dT} I(t)e^{-r_d(T-t)}N(d_1) \\
& \qquad -X(t)Y(0)e^{(r_f-r_d)T}e^{r_dT}e^{-r_f(T-t)}N(d_2) \\
& = X(t)Y(0)e^{r_ft} + I(t)e^{r_dt}N(d_1) - X(t)Y(0)e^{r_ft}N(d_2) \\
& = I(t)e^{r_dt}N(d_1) + X(t)Y(0)e^{r_ft}(1-N(d_2)),
\end{align*} 
where $d_1$ can be simplified as follows:
\begin{align*}
& \ln \frac{I}{X \cdot Y(0)e^{(r_f-r_d)T}} + \left( r_f-r_d + \frac{1}{2}\sigma_{X, I}^2 \right) (T-t) \\
& \qquad = \ln \frac{I(t)}{X(t) \cdot Y(0)} - (r_f-r_d)T + (r_f-r_d)(T-t) + \frac{1}{2}\sigma_{X, I}^2 (T-t) \\
& \qquad = \ln \frac{I(t)e^{r_dt}}{X(t) \cdot Y(0)e^{r_ft}} + \frac{1}{2}\sigma_{X, I}^2 (T-t).
\end{align*} 
Therefore we obtain a representation of the \textit{\textbf{domestic price function}} of the option: 
\begin{equation} \label{eq055}
V_d(X, I, t) = I(t)e^{r_dt}N(\hat{d_1}) + X(t)Y(0) e^{r_ft}N(-\hat{d_2}),
\end{equation}   
\begin{align*}
& \hat{d}_1 = \frac{\ln \frac{I(t)e^{r_dt}}{X(t) \cdot Y(0)e^{r_ft}} + \frac{1}{2}\sigma_{X, I}^2 (T-t)}{\sigma_{X, I}\sqrt{T-t}}, \\
& \hat{d}_2 = \hat{d}_1 - \sigma_{X, I} \sqrt{T-t}.
\end{align*} 

The \textit{\textbf{financial meaning}} of this formula is clear: $I(t)e^{r_dt}$ is the price (calculated in domestic currency) at time $t$ of the money on the account where 1 unit of domestic currency is saved at time $t = 0$ after calculation of interest with domestic interest rate (under consideration of inflation rate). $X(t)Y(0)e^{r_ft}$ is the price (calculated in domestic currency) at time $t$ of the money on the account where 1 unit of domestic currency is saved at time $t = 0$ after calculation of interest with foreign interest rate. $N(\hat{d}_1)$  and $N(\hat{d}_2)$  show the portions of the two quantities $I(t)e^{r_dt}$  and $X(t)Y(0)e^{r_ft}$. The portions depend on which is greater and in particular, at maturity one is 0 and another one is 1.  
        
A representation of the \textit{\textbf{foreign (currency) price function}} of the option is given as follows:
\begin{equation} \label{eq056}
V_f = Y(t)V_d(t) = Y(t)I(t)e^{r_dt}N(\hat{d}_1) + Y(0) e^{r_ft}N(-\hat{d}_2).
\end{equation}    
Here $\hat{d}_1$  can be written by $Y(t)$ as follows:  
\begin{equation*}
\hat{d}_1 = \frac{\ln \frac{Y(t)I(t)e^{r_dt}}{Y(0)e^{r_ft}} + \frac{1}{2}\sigma_{X, I}^2 (T-t)}{\sigma_{X, I}\sqrt{T-t}}.
\end{equation*}  
In particular, the price of the option at $t = 0$ calculated by foreign currency is
\begin{equation}\label{eq057}
\begin{array}{l}
V_f(0) = Y(0)[I(0)N(\tilde{d}_1) + N(-\tilde{d}_2)] = Y(0)[1+I(0)N(\tilde{d}_1)-N(\tilde{d}_2)], \\
\tilde{d}_1 = \frac{\ln I(0)+\frac{1}{2}\sigma_{X, I}^2T}{\sigma_{X, I}\sqrt{T}}, ~~ \tilde{d}_2 = \tilde{d}_1 -\sigma_{X, I}\sqrt{T}.
\end{array}
\end{equation}   

\begin{remark}
In \cite{ben} they studied pricing problems of convertible bonds. Convertible bonds are a kind of derivatives of interest rate but most of models of interest rate (including  HJM, Vasicek, Cox-Ingersoll-Ross, Ho-Lee and Hull-White models) are not geometric Brownian motion. Thus most PDEs satisfied by prices of interest rate derivatives are not standard (multi-dimensional) Black-Scholes equations and our theorem 1 cannot be directly applied to reduce the dimension. However fortunately, the prices of risk free zero coupon bonds under Vasicek, Ho-Lee and Hull-White models follow geometric Brownian motion. Therefore, in what follows by seeing convertible bonds as a derivatives of zero coupon bond we derive multi-dimensional Black-Scholes type equations satisfied by convertible bonds and then apply the Theorem 1. 
\end{remark}

%
%

\subsection{Pricing Convertible Bonds}

\textbf{Problem:} A firm issues a zero coupon bond with maturity $T_1$ and face value 1. On a fixed day 
$T_0 (T_0 < T_1)$ the bond can be converted to a share of stock $S$. 
The stock $S$ does not pay dividend. 
Then we must find a representation of the convertible bond price at time $t < T_0$. \\

\noindent \textbf{Mathematical Model}

Let $S(t)$ be the price of stock at time $t, p = p(t, T)$ time $t$-price of risk free zero coupon bond
and $r(t)$ short rate. \\

\noindent \textbf{Assumptions}

\begin{enumerate}
\item	 The short rate follows Vacisek model (Ho-Lee model and Hull-White model can be considered with the same method): \label{ee2}
\begin{equation}\label{eq058}
dr_t = \theta (\mu_r-r_t)dt + \sigma_r dW(t).
\end{equation}   
Here the volatility $\sigma_r$ of short rate is a deterministic constant.
\item	 The stock price $S(t)$ follows the following geometric Brownian motion:
\begin{equation*}
dS(t) = \alpha_S S(t)dt + \sigma_SS(t)dW^S(t).
\end{equation*}  
Here the volatility $\sigma_S$ of stock price is a deterministic constant. 
\item	 $W(t)$ and $W^S(t)$ are vector Wiener processes with the following relations:
\begin{equation*}
dW(t) \cdot dW^S(t) = \rho dt, \vert \rho \vert <1.
\end{equation*}  
\item	 The price function of the convertible bond is given by a deterministic function $V(S, r, t)$. \label{ee3}
\end{enumerate}

\noindent \textbf{The price of risk free zero coupon bond:} Under the assumption \ref{ee2} the price $p(r, t; T)$ of zero coupon bond with maturity $T$ and face value 1 satisfies the following problem \cite{wil}:
\begin{equation} \label{eq059}
\left\{
\begin{array}{l}
\frac{\partial p}{\partial t} + \frac{1}{2} \sigma_r^2 \frac{\partial^2 p}{\partial r^2}-rp = (\lambda \sigma_r - \theta(\mu_r - r)) \frac{\partial p}{\partial r}, \\
p(r, T; T) = 1.
\end{array}
\right.
\end{equation}   
Here $\lambda$ is market risk price. The solution to \eqref{eq059} is given as  
\begin{equation*}
p(r, t; T) = A(t)e^{-B(t)r}, ~~ \frac{\partial p}{\partial r} = -B(t)p, ~~ B(t) = \frac{1-e^{-\theta(T-t)}}{\theta},
\end{equation*}   
from which we can know that the short rate $r(t)$ is a deterministic function of zero coupon bond price $p = p(t, T)$:
\begin{equation} \label{eq060}
r = r(t, p) = -\frac{1}{B(t)}(\ln p-A(t)) = - \frac{1}{B(t)}\ln p + \frac{A(t)}{B(t)}.
\end{equation}  
By It\^o formula, \eqref{eq058} and \eqref{eq059} we have 
\begin{align*}
dp & = \left( \frac{\partial p}{\partial t} + \frac{1}{2} \sigma_r^2 \frac{\partial^2 p}{\partial r^2} \right)dt + \frac{\partial p}{\partial r}dr \\
& = \left( \frac{\partial p}{\partial t} + \frac{1}{2} \sigma_r^2 \frac{\partial^2 p}{\partial r^2} + \theta(\mu_r-r) \frac{\partial p}{\partial r}\right)dt + \sigma_r \frac{\partial p}{\partial r}dW^r(t) \\
& = \left( rp + \lambda\sigma_r \frac{\partial p}{\partial r} \right)dt +\sigma_r \frac{\partial p}{\partial r} dW^r(t) \\
& = (r-\lambda \sigma_rB(t))pdt - \sigma_r B(t) pdW^r(t). 
\end{align*}               
Thus zero coupon bond price process follows a geometric Brownian motion:
\begin{equation*}
dp = (r - \lambda \sigma_rB(t))pdt - \sigma_rB(t)pdW^r(t).
\end{equation*}   

\noindent \textbf{PDE pricing model:} Let denote
\begin{equation*}
p = p(t, T_1), ~~ \Sigma_p = \Sigma_p(t, T_1) = \sigma_r B(t, T_1). 
\end{equation*}  
The maturity $T_0$-payoff of our convertible bond is presented by
\begin{equation*}
V(T_0) = \max [S(T_0), p(T_0, T_1)].
\end{equation*}   
From \eqref{eq060} and assumption \ref{ee3} we can know that the price function of the convertible bond is given by a deterministic function $V(S, p, t)$. 

In order to derive the equation satisfied by $V(S, p, t)$, by $\Delta$-hedging, we construct a portfolio $\Pi$
\begin{equation*}
\Pi = V-\Delta_1S-\Delta_2p.
\end{equation*}   
(This portfolio consists of a sheet of bond, $\Delta_1$ shares of stock and $\Delta_2$ sheets of risk free zero coupon bonds.) We select $\Delta_1$ and $\Delta_2$ so that $\Pi$ becomes risk free in $(t, t + dt)$, that is,
\begin{equation*}
d\Pi = r(t, p) \Pi dt
\end{equation*}
which is equivalent to
\begin{equation} \label{eq061}
dV - \Delta_1dS - \Delta_2dp = r(t, p)\cdot (V-\Delta_1S-\Delta_2p)dt.
\end{equation} 
By It\^o formula, we have
\begin{equation} \label{eq062}
dV = \left\{ \frac{\partial V}{\partial t}+\frac{1}{2}\left[ \sigma_S^2 S^2 \frac{\partial^2 V}{\partial S^2}-2\rho\sigma_S \Sigma_p Sp \frac{\partial^2 V}{\partial S \partial p} + \Sigma_p^2 p^2 \frac{\partial^2 V}{\partial p^2} \right] \right\}dt + \frac{\partial V}{\partial S}dS + \frac{\partial V}{\partial p}dp.
\end{equation}      
Thus if we select
\begin{equation*}
\Delta_1 = \frac{\partial V}{\partial S}, ~~ \Delta_2 = \frac{\partial V}{\partial p},
\end{equation*} 
and substitute \eqref{eq062} into \eqref{eq061}, then we have the PDE model of convertible bonds:
\begin{align}
& \frac{\partial V}{\partial t} + \frac{1}{2} \left[ \sigma_S^2 S^2 \frac{\partial^2 V}{\partial S^2} - 2\rho \sigma_S \Sigma_p Sp \frac{\partial^2 V}{\partial S \partial p} + \Sigma_p^2 p^2 \frac{\partial^2 V}{\partial p^2} \right] \nonumber \\
& \qquad \qquad \qquad + r(t, p)S \frac{\partial V}{\partial S} + r(t, p)p \frac{\partial V}{\partial p} - r(t, p)V = 0, \label{eq063}\\
& V(S, p, T_0) = \max (S, p). \label{eq064}
\end{align}    
  
The equation \eqref{eq063} seems a standard two dimensional Black-Scholes equation but the coefficients of second order derivatives depend on time and furthermore the coefficients of first order derivatives {\it depend on time and special variables}. Thus strictly speaking, this example is not contained in the range of application of theorem 1 but the maturity payoff \eqref{eq064} has homogeneity and the change of numeraire 
\begin{equation} \label{eq065}
U = \frac{V}{p}, ~~ z = \frac{S}{p}
\end{equation}  
does work well. (The financial meaning of \eqref{eq065} is to change prices of bond and stock to {\it relative} prices on risk free zero coupon bond price.)  Then we have one dimensional problem:
\begin{equation} \label{eq066}
\left\{
\begin{array}{l}
\frac{\partial U}{\partial t} + \frac{1}{2} \left( \sigma_S^2 + 2\rho\sigma_S \Sigma_p (t, T_1) + \Sigma_p^2(t, T_1) \right) z^2 \frac{\partial^2 U}{\partial z^2} = 0, \\
U(z, T_0) = 1+ \max (z-1, 0).
\end{array}
\right.
\end{equation}   
Here note that the coefficient depends on time: 
\begin{equation*}
\sigma_z(t) = \sqrt{\sigma_S^2 + 2\rho\sigma_S\Sigma_p(t, T_1) + \Sigma_p^2(t, T_1)}.
\end{equation*} 
Using a standard method of Jiang (\cite{jia}, Chap. 5, (5.4.10)) we get the solution to \eqref{eq066}:
\begin{equation}\label{eq067}
\begin{array}{l}
U(z, t) = 1+zN(\bar{d}_1) - N(\bar{d}_2), \\
\bar{d}_1 = \frac{\ln z+\frac{1}{2}\sigma^2(t, T_0)}{\sigma(t, T_0)}, ~~ \bar{d}_2 = \bar{d}_1 -\sigma(t, T_0).
\end{array}
\end{equation}    
Here
\begin{align}
\sigma^2(t, T_0) = \int_t^{T_0} \sigma_z^2(u)du & = \int_t^{T_0} \left[ \sigma_S^2 + 2\rho\sigma_S\Sigma_p(u, T_1) + \Sigma_p^2(u, T_1) \right] du   \nonumber \\
& = \int_t^{T_0} \left[ \sigma_S^2 + 2\rho\sigma_S\sigma_r B(u, T_1) + \sigma_r^2B^2(u, T_1) \right] du. \label{eq068}
\end{align}
Returning to original variables $V, S$ and $p(t, T_1)$, we get the price of convertible bond:
\begin{equation}\label{eq069}
V(S, p, t) = p(t, T_1) + SN(\tilde{d}_1) - p(t, T_1)N(\tilde{d}_2),
\end{equation}
\begin{equation}\label{eq070}
\tilde{d}_1 = \frac{\ln \frac{S}{p(t, T_1)} + \frac{1}{2}\sigma^2(t, T_0)}{\sigma^2(t, T_0)}, ~~ \tilde{d}_2 = \tilde{d}_1 - \sigma^2(t, T_0).
\end{equation}

\begin{remark} The formula \eqref{eq069} is the same with the one in \cite{ben} where they used HJM model of short rate while we used Vasicek model. 
That is why zero coupon bond price becomes a deterministic function of short rate under Vasicek model. 
\end{remark}


\subsection{Pricing Corporate Bonds which can be converted to stock at maturity}

\textbf{Problem:} The holder of this contract has the right to convert the bond to $\alpha$ shares of stock at maturity. The price of such bond is related to not the price of a sheet of bond but the firm value. Note that firm value is not tradable. What is the fair price of this contract? 

This problem was studied in Li and Ren \cite{li}. Their equation for pricing is not a standard Black-Scholes equation. \\

\noindent \textbf{Mathematical Modeling} 

\noindent \textbf{Assumptions}
\begin{enumerate}
\item Let assume that a firm issued m shares of stock with time $t$-stock price $S(t)$ per share and $n$ sheets of bond which has with time $t$-price $C_t$. Then firm value $V(t)$ is
\begin{equation}\label{eq071}
V_t = mS_t + nC_t
\end{equation} 
\item Risk free short rate $r_t$ follows Vasicek model:
\begin{equation}\label{eq072}
dr_t = \theta(\mu_r - r_t)dt + \sigma_rdW^r(t).
\end{equation} 
\item Firm value $V(t)$ follows a geometric Brownian motion
\begin{equation*}
dV(t) = \mu_V V(t)dt + \sigma_V V(t)dW^V(t)
\end{equation*}  
where drift $\mu_V$ and $\sigma_V$ are all constants.
\item $W_r(t)$ and $W_V(t)$ are vector Wiener processes with the following relation
\begin{equation*}
dW^r(t) \cdot dW^V(t) = \rho dt, ~~ \vert \rho \vert < 1.
\end{equation*}  
\item The bond price is given by a deterministic function $C(V, r, t)$. \label{ee4}
\item The maturity payoff is given by $C_T = \max \{K, \alpha S_T\}$.  Here $K$ is maturity time $T$-payoff (the sum of principal and coupon at time $T$) and $\alpha$ is convertible rate.

By \eqref{eq071}, if the bond is converted, then we have
\begin{equation*}
V_T = mS_T + n\alpha S_T.
\end{equation*}
Therefore we have 
\begin{equation}\label{eq073}
C_T = \max \left\{ K, \frac{\alpha}{m+n\alpha}V_T \right\}.
\end{equation} 
\end{enumerate} 

\noindent \textbf{Zero coupon bond price process:} Let $p(t, T)$ be the price of zero coupon bond with maturity $T$. 
Then as described in subsection 3.4, short rate $r(t)$ is a deterministic function of zero coupon bond price $p = p(t, T)$ and furthermore zero coupon bond price process follows the following geometric Brownian motion: 
\begin{equation}\label{eq074}
dp = \alpha(t)pdt - \Sigma_p(t)pdW^r(t),
\end{equation}  
where $\Sigma_p(t)$ is a function of volatility $\sigma_r$ of short rate $r$ and time $t$ as in the subsection 3.4. \\

\noindent \textbf{PDE modeling:} Since $p(T, T) = 1$, \eqref{eq073} can be rewritten as
\begin{equation}\label{eq075}
C_r = \max \left\{ Kp, \frac{\alpha}{m+n\alpha}V_T\right\}.
\end{equation}   
From the assumption \ref{ee4} and $r(t) = r(t, p_t)$ we can know that the price function of the convertible bond is given by a deterministic function $C(V, p, t)$. 

In order to derive the equation satisfied by $C(V, p, t)$, by $\Delta$-hedging, we construct a portfolio $\Pi$ 
\begin{equation*}
\Pi = V - \Delta_1S - \Delta_2p.
\end{equation*}
(This portfolio consists of a sheet of convertible bond, $\Delta_1$ shares of stock and $\Delta_2$ sheets of risk free zero coupon bonds. Note that firm value is not tradable.) 
We select $\Delta_1$ and $\Delta_2$ so that $\Pi$ becomes risk free in $(t, t + dt)$, that is,
\begin{equation*}
d\Pi = r(t, p) \Pi dt
\end{equation*}
which is equivalent to
\begin{equation}\label{eq076}
d\Pi = dC - \Delta_1dS - \Delta_2dp = r(t, p)\cdot (C-\Delta_1S-\Delta_2p)dt.
\end{equation}    
By It\^o formula, we have
\begin{equation} \label{eq077}
dC = \left\{ \frac{\partial C}{\partial t}+\frac{1}{2}\left[ \sigma_V^2 V^2 \frac{\partial^2 C}{\partial V^2}-2\rho\sigma_V \Sigma_p Vp \frac{\partial^2 C}{\partial V \partial p} + \Sigma_p^2 p^2 \frac{\partial^2 C}{\partial p^2} \right] \right\}dt + \frac{\partial C}{\partial V}dV + \frac{\partial C}{\partial p}dp.
\end{equation}       
From \eqref{eq071} we have
\begin{equation}\label{eq078}
dS = \frac{dV-ndC}{m},
\end{equation}    
If we substitute \eqref{eq077}, \eqref{eq078} and \eqref{eq071} into \eqref{eq076}, then we have
\begin{align*}
d\Pi & = dC - \frac{\Delta_1}{m}dV + \frac{\Delta_1n}{m}dC - \Delta_2dp \\
& = \left( 1+ \frac{\Delta_1n}{m} \right)dC - \frac{\Delta_1}{m}dV - \Delta_2dp \\
& = \left( 1+ \frac{\Delta_1n}{m} \right) \left\{ \left[\frac{\partial C}{\partial t}+\frac{1}{2}\left[ \sigma_V^2 V^2 \frac{\partial^2 C}{\partial V^2}-2\rho\sigma_V \Sigma_p Vp \frac{\partial^2 C}{\partial V \partial p} + \Sigma_p^2 p^2 \frac{\partial^2 C}{\partial p^2} \right] \right]dt \right. \\
& \qquad \qquad \left. + \frac{\partial C}{\partial V}dV + \frac{\partial C}{\partial p}dp \right\} - \frac{\Delta_1}{m}dV - \Delta_2dp 
\end{align*} 
\begin{align*}
& \quad =  \left( 1+ \frac{\Delta_1n}{m} \right)\left[\frac{\partial C}{\partial t}+\frac{1}{2}\left[ \sigma_V^2 V^2 \frac{\partial^2 C}{\partial V^2}-2\rho\sigma_V \Sigma_p Vp \frac{\partial^2 C}{\partial V \partial p} + \Sigma_p^2 p^2 \frac{\partial^2 C}{\partial p^2} \right] \right]dt \\
& \quad \qquad \qquad + \left[ \left( 1+ \frac{\Delta_1n}{m} \right)\frac{\partial C}{\partial V} - \frac{\Delta_1}{m} \right] dV + \left[ \left( 1+ \frac{\Delta_1n}{m} \right)\frac{\partial C}{\partial p} - \Delta_2 \right] dp \\
& \quad = r \left[ \left( 1+ \frac{\Delta_1n}{m} \right)C - \frac{\Delta_1}{m}V - \Delta_2p \right]dt.
\end{align*}                  
Here if we select so that 
\begin{equation*}
\left( 1+ \frac{\Delta_1n}{m} \right) \frac{\partial C}{\partial V} - \frac{\Delta_1}{m} = 0, ~~ \left( 1+ \frac{\Delta_1n}{m} \right) \frac{\partial C}{\partial p} - \Delta_2 = 0,
\end{equation*} 
that is,
\begin{equation} \label{eq079}
\Delta_1 = \frac{m\frac{\partial C}{\partial V}}{1-n \frac{\partial C}{\partial V}}, ~~  \Delta_2 = \frac{\frac{\partial C}{\partial p}}{1-n \frac{\partial C}{\partial V}},
\end{equation}  
and multiply $1-n \cdot \frac{\partial C}{\partial V}$  to the two sides, then we have 
\begin{align} \label{eq080}
\frac{\partial C}{\partial t} &+\frac{1}{2}\left[ \sigma_V^2 V^2 \frac{\partial^2 C}{\partial V^2}-2\rho\sigma_V \Sigma_p Vp \frac{\partial^2 C}{\partial V \partial p} + \Sigma_p^2 p^2 \frac{\partial^2 C}{\partial p^2} \right] \nonumber \\
& \qquad + r(t, p)V \frac{\partial C}{\partial V} + r(t, p)p \frac{\partial C}{\partial p} - r(t, p)C = 0.
\end{align}  
So we obtain the pricing model \eqref{eq080} and \eqref{eq075} of the convertible bond. 
The equation \eqref{eq080} is just the same with \eqref{eq063} and the maturity payoff \eqref{eq075} has homogeneity. 
Thus change of numeraire 
\begin{equation} \label{eq081}
U = \frac{C}{p}, ~~ z = \frac{V}{p}.
\end{equation} 
transforms \eqref{eq080} and \eqref{eq075} into the following one dimensional Black-Scholes terminal value problem:
\begin{equation} \label{eq082}
\left\{
\begin{array}{l}
\frac{\partial U}{\partial t} + \frac{1}{2} \left( \sigma_V^2 + 2\rho\sigma_V \Sigma_p (t) + \Sigma_p^2(t) \right) z^2 \frac{\partial^2 U}{\partial z^2} = 0, \\
U(z, T_0) = K+ \frac{\alpha}{m+n\alpha} \left[ z-K\left(n+\frac{m}{\alpha}\right)\right]^+.
\end{array}
\right.
\end{equation}    
Using a standard method we get the solution to \eqref{eq082}:
\begin{equation*}
U(z, t) = K+\frac{\alpha}{m+n\alpha}zN(\bar{d}_1) - KN(\bar{d}_2), 
\end{equation*}    
\begin{equation*}
\bar{d}_1 = \frac{\ln \frac{\alpha z}{K(m+n\alpha)} + \frac{1}{2}\sigma^2(t, T)}{\sigma(t, T)}, ~~ \bar{d}_2 = \bar{d}_1 -\sigma(t, T).
\end{equation*}    
Here
\begin{equation} \label{eq083}
\sigma^2(t, T) = \int_t^T \left[ \sigma_V^2 + 2\rho \sigma_V \Sigma_p(u) + \Sigma_p^2(u) \right]du.
\end{equation}   
Returning to original variables $C, V$ and $p(t, T_1)$, we get the price of corporate bond with convertible clause at maturity:
\begin{equation}\label{eq084}
C(S, p, t) = Kp(t, T) + \frac{\alpha}{m+n\alpha}VN(\tilde{d}_1) - Kp(t, T)N(\tilde{d}_2),
\end{equation}
\begin{align*}
& \tilde{d}_1 = \frac{\ln \left( \frac{V}{Kp(t, T)} \cdot \frac{\alpha}{(m+n\alpha)} \right) + \frac{1}{2}\sigma^2(t, T)}{\sigma(t, T)}, \\
& \tilde{d}_2 = \tilde{d}_1 - \sigma^2(t, T).
\end{align*}

\begin{remark} 
 The formula \eqref{eq084} is the same with the one in Li and Ren \cite{li}.
\end{remark} 

%
%

\section{Conclusions}

Geman \cite{gem} and Jamshidian \cite{jam} gave the mathematical foundation of change numeraire, and since then the change 
of numeraire has been an important tool for complicated option pricing problems to simplify \cite{ben,jia,li}. 
In \cite{ben}, they explained the essence of numeraire approach in the viewpoint of stochastic calculus. 
In this paper we tried to understand the essence of numeraire approach in the viewpoint of PDE theory with 5 concrete 
practical problems in finance as examples. 
The numeraire approach is not an all-purpose tool for complicated option pricing problems but through this study we can know the {\bf following two facts}: 
  
1) When there are several risk sources and maturity payoff function is a homogeneous function on risk source variables, 
we can reduce the dimension of pricing problem of financial derivative by 1 using one risk source variable as a numeraire.

2) The reason that numeraire approach works well in pricing problem of a financial derivative is that the pricing problem itself, 
in fact, has a structure of Black-Scholes equation (see the subsections 3.4 and 3.5) and Black-Scholes equations have a  special preserving property under change of numeraire.

\end{document}